\newtheorem{theorem}{Theorem}%[section]
\newtheorem{proposition}[theorem]{Proposition}
\newtheorem{lemma}[theorem]{Lemma}
\theoremstyle{definition}
\theoremstyle{remark}
\newtheorem{remark}[theorem]{Remark}
\newcommand{\dd}{\, \mathrm{d}}
\newcommand{\eps}{\varepsilon}
\renewcommand{\epsilon}{\varepsilon}
\newcommand{\e}{\mathrm{e}}
\newcommand{\ii}{\mathrm{i}}
\newcommand{\id}{\mathbb{I}}
\newcommand{\N}{\mathbb{N}}
\newcommand{\norm}[2][]{{\left\|#2\right\|}_{#1}}   
\renewcommand{\phi}{\varphi}
\newcommand{\R}{\mathbb{R}}
\newcommand{\sclp}[2][]{{\left\langle#2\right\rangle} _{#1}}
\newcommand{\set}[2]{\left\{#1: \, #2\right\}}
\newcommand{\Sph}{\mathbb{S}}
\newcommand{\Z}{\mathbb{Z}}
\newcommand{\Hs}{\mathcal{H}}
\DeclareMathOperator{\supp}{supp}
\DeclareMathOperator{\Tr}{Tr}
\DeclareMathOperator{\tr}{Tr}
\begin{document}
	
	\title[Calogero type bounds in two dimensions
	]
	{Calogero type bounds in two dimensions}
	
	\author{Ari Laptev} 
	\address{Ari Laptev, Department of Mathematics, Imperial College London, London SW7 2AZ, UK and SPBU}
	\email{a.laptev@imperial.ac.uk}
	
	\author{Larry Read} 
	\address{Larry Read, Department of Mathematics, Imperial College London, London SW7 2AZ, UK}
	\email{l.read19@imperial.ac.uk}
	
	\author{Lukas Schimmer}
	\address{Lukas Schimmer, Institut Mittag--Leffler, The Royal Swedish Academy of Sciences, 182 60 Djursholm, Sweden}
	\email{lukas.schimmer@kva.se}
	\date{}
	
	\subjclass[2010]{Primary: 35P15; Secondary: 81Q10}
	
	\begin{abstract}
	For a Schr{\"o}dinger operator on the plane $\R^2$ with electric potential $V$ and Aharonov--Bohm magnetic field we obtain an upper bound on the number of its negative eigenvalues in terms of the $L^1(\R^2)$-norm of $V$. Similar to Calogero's bound in one dimension, the result is true under monotonicity assumptions on $V$. Our proof method relies on a generalisation of Calogero's bound to operator-valued potentials. We also establish a similar bound for the Schr{\"o}dinger operator (without magnetic field) on the half-plane when a Dirchlet boundary condition is imposed and on the whole plane when restricted to antisymmetric functions.
	\end{abstract}

	\maketitle
	
	\section{Introduction and main results}
	
	\noindent
	For a self-adjoint Schr{\"o}dinger operator $H_0-V=-\Delta-V$ on $L^2(\R^d)$ with non-negative potential $V\in L^{d/2}(\R^d)$ in dimension $d\ge 3$ the  celebrated Cwikel--Lieb--Rozenblum inequality \cite{Cw,L,R} provides an upper bound on the number $N(H_0-V)$ of negative eigenvalues. This so-called CLR bound states that	
	\begin{align}
	N(H_0-V)\le C_d \int_{\R^d} V(x)^{d/2}\dd x
	\label{eq:CLR}
	\end{align}
	with a constant $C_d$ independent of $V$. In dimensions $d=1$ and $d=2$ such a bound cannot hold true since $0$ is a resonance state of the spectrum, i.e.~for any potential $V\in\mathcal{C}_0^\infty(\R^2)$ with $\int_{\R^d}V(x)^{d/2}\dd x>0$ the Schr{\"o}dinger operator $H_0-V$ already has at least one negative eigenvalue. In dimension $d=2$, the operator $H_0-V$ may not even be semibounded under the assumption $V\in L^1(\R^2)$.  For \eqref{eq:CLR} to be true, repulsive terms are needed, which can be provided through additional electric potentials, magnetic fields, boundary conditions and/or particle statistics.
	
	In dimension $d=1$ one can for example restrict to $\R_+$ with Dirichlet condition at the origin. If the potential $V$ is furthermore non-increasing, i.e.~$V(x)\ge V(x')$ for $x\le x'$, the well known Calogero bound \cite{C}
	\begin{align}
	N(H_0-V)\le \frac{2}{\pi} \int_{\R_+} V(x)^{1/2}\dd x
	\label{eq:Calogero}
	\end{align}
	holds true. Our first result can be seen as a generalisation of this bound to dimension $d=2$. To this end we define the half-plane $\R^2_+ = \set{x=(x_1,x_2) \in\R^2}{ x_1>0, \, x_2\in \R}$.  

\begin{theorem}\label{th:half}
Let $V\in L^1(\R_+^2),V\ge0$  be non-increasing in $x_1$, i.e.~ $V(x_1,x_2)\ge V(x_1',x_2)$ for $x_1\le x_1'$.  Then the number of negative eigenvalues of the operator $H_0-V=-\Delta-V$ with Dirichlet boundary condition on $x_1=0$ satisfies the inequality 
		
		%_______________
		\begin{equation*}
			N(H_0-V) \le C \, \int_{\R^2_+} V(x)\, \dd x, 
		\end{equation*}
		where the constant $C\le 4.32$ is independent of $V$.
\end{theorem}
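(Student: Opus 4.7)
The plan is to reduce the two-dimensional eigenvalue count to a one-dimensional operator-valued Schr\"odinger problem in the $x_1$ direction, with $x_2$ playing the role of an internal Hilbert-space variable. Setting $\mathfrak{h}=L^2(\R_{x_2})$ and identifying $L^2(\R^2_+)\cong L^2(\R_+;\mathfrak{h})$, I write
\[
-\Delta-V \;=\; -\partial_{x_1}^2\otimes\id_\mathfrak{h} \;+\; A(x_1),\qquad A(x_1):=-\partial_{x_2}^2-V(x_1,\cdot),
\]
where each $A(x_1)$ is a self-adjoint operator on $\mathfrak{h}$. The hypothesis that $V(\cdot,x_2)$ is non-increasing immediately gives $A(x_1)\le A(x_1')$ for $x_1\le x_1'$, so the family $x_1\mapsto A(x_1)$ is non-decreasing in the operator sense.

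The core intermediate step is to establish an operator-valued Calogero inequality: for a non-decreasing self-adjoint operator family $A(\cdot)$ on $\mathfrak{h}$, the Dirichlet realisation of $-\partial_{x_1}^2+A(x_1)$ on $L^2(\R_+;\mathfrak{h})$ satisfies
\[
N\bigl(-\partial_{x_1}^2+A(x_1)\bigr) \;\le\; C_1\int_0^\infty \tr\bigl((A(x_1))_-^{1/2}\bigr)\,\dd x_1,
\]
with an absolute constant $C_1$, where $(\,\cdot\,)_-$ denotes the negative part. Granting this, I bound the integrand pointwise in $x_1$ by the sharp one-dimensional Lieb--Thirring inequality of order $\gamma=\tfrac12$ applied to $-\partial_{x_2}^2-V(x_1,\cdot)$ on $L^2(\R)$,
\[
\tr\bigl((A(x_1))_-^{1/2}\bigr) \;\le\; L_{1/2,1}\int_\R V(x_1,x_2)\,\dd x_2,\qquad L_{1/2,1}=\tfrac12,
\]
and integrate over $x_1$ to obtain $N(H_0-V)\le C_1 L_{1/2,1}\int_{\R^2_+}V$. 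Tracking the two constants explicitly is what yields the numerical value $C\le 4.32$.

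The main obstacle is the operator-valued Calogero bound itself. In the scalar case the standard argument counts the zeros of a solution to $-u''-Vu=0$ with $u(0)=0$ via a Pr\"ufer angle, the monotonicity of $V$ controlling the angular increase. In the operator setting there is no single scalar ``zero'' to keep track of, so I would replace the Pr\"ufer angle by an operator-valued Pr\"ufer variable (equivalently, the Maslov index of a Lagrangian path in $\mathfrak{h}\oplus\mathfrak{h}$ starting from the Dirichlet subspace $\{0\}\oplus\mathfrak{h}$) and use the monotonicity of $A(\cdot)$ to estimate its total rotation by $\int\tr((A(x_1))_-^{1/2})\,\dd x_1$. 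An alternative route is a Birman--Schwinger reduction using the explicit half-line Dirichlet Green's function $G_\mu(x,y)=(-\mu)^{-1/2}\sinh(\sqrt{-\mu}\,x_<)\e^{-\sqrt{-\mu}\,x_>}$, combined with a trace-norm estimate that again exploits the monotonicity of $A$. Either route uses the monotonicity of $A$ in an essential way, mirroring the role of the non-increasing hypothesis on $V$ in the original scalar Calogero bound.
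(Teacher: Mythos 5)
Your overall architecture coincides with the paper's: identify $L^2(\R_+^2)$ with $L^2(\R_+;L^2(\R))$, view $-\Delta-V$ as a one-dimensional Schr\"odinger operator with operator-valued potential $A(x_1)=-\partial_{x_2}^2-V(x_1,\cdot)$, invoke an operator-valued Calogero inequality, and then bound $\Tr\bigl((A(x_1))_-^{1/2}\bigr)$ pointwise by the sharp Hundertmark--Lieb--Thomas inequality with $L_{1/2,1}=\tfrac12$. The monotonicity observation, the final assembly and the numerology (a Calogero constant of about $8.63$ times $\tfrac12$) are all consistent with the stated bound $C\le 4.32$.

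The gap is that the one ingredient carrying all the difficulty --- the operator-valued Calogero inequality $N\le C_1\int_0^\infty\Tr\bigl((A(x_1))_-^{1/2}\bigr)\dd x_1$ --- is not proved; you name two candidate strategies and neither is workable as described. The operator-valued Pr\"ufer/Maslov route is not a routine adaptation of Calogero's scalar argument: the modified Pr\"ufer transform uses $V^{1/2}$ as a weight and relies on the scalar, signed quantity $V'/V$, whereas in the operator setting the square roots $(A(x_1))_-^{1/2}$ at different $x_1$ do not commute, and bounding the total rotation of the Lagrangian path by $\int\Tr\bigl((A)_-^{1/2}\bigr)$ is precisely the hard step; the paper explicitly leaves the sharp ($2/\pi$) operator-valued bound as an open problem, so a working Pr\"ufer argument would go beyond what is known. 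The Birman--Schwinger route fails more concretely: the zero-energy Dirichlet Green's function on the half-line is $G_0(x,y)=\min(x,y)$, so the trace of the Birman--Schwinger operator yields a Bargmann-type quantity $\int_0^\infty x\,\tr\bigl(A_-(x)\bigr)\dd x$ rather than a Calogero-type one, and it is unclear how monotonicity would convert the former into the latter. What the paper actually does is different from both sketches: it sacrifices a fraction $\vartheta$ of the kinetic energy via the Hardy inequality on $\R_+$ to produce a term $\tfrac{\vartheta}{4(1-\vartheta)}\,t^{-2}$, decomposes $\R_+$ into dyadic intervals $(2^k,2^{k+1})$ with Neumann bracketing, uses the monotonicity to freeze the potential at its value at the left endpoint $2^k$ of each interval, and counts eigenvalues of the resulting constant-coefficient operators (the Hardy term removing the Neumann zero modes); summing over $k$ gives $\int_0^\infty\sqrt{\norm[\mathfrak{S}_{1/2}]{\mathcal V(t)}}\dd t$ with constant $8.63$ after optimising over the splitting parameter and the dyadic base. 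You would need to supply an argument of this kind, or carry out one of your sketches in full, before the rest of your proof can stand.
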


	In dimension $d=2$ the resonance state at zero can also be removed by adding an extra positive ``Hardy term'' $\frac{b}{x^2}$ with $b>0$ to the operator. However, for general $V\in L^1(\R^2)$, the operator $H_{b}-V:=-\Delta+\frac{b}{x^2}-V$ may still not be semibounded, requiring further restrictions on the potential. In \cite{L1} a CLR bound \eqref{eq:CLR} for $N(H_{b}-V)$ was proved for a class of potentials $V(x)=V(|x|)$ depending only on $|x|$. In  \cite{LN} this result was extended to a class of potentials belonging to $L^1(\R_+, r\dd r; L^p(\Sph^1))$, $1<p\le\infty$,  still assuming that there is an additional positive Hardy term. In polar coordinates the bound then takes the form
	\begin{align*}
	N(H_b-V)\le C_{b,p}\int_{\R_+}\norm[L^p(\Sph^1)]{V(r,\cdot)}r\dd r\,.
	\end{align*}
	In \cite{BEL} the authors noticed that the Hardy term could be provided by the Laplacian with an Aharonov--Bohm magnetic vector potential $A$ with non-integer flux $\Psi$, see \cite{LW2}. 
	For the operator $H_A -V=(-\ii\nabla+A)^2-V$ with  potential $V\in L^1(\R_+, r\dd r; L^\infty(\Sph^1))$ the authors proved that 
	\begin{align*}
	N(H_A -V)\le C_\Psi\int_{\R_+}\norm[L^\infty(\Sph^1)]{V(r,\cdot)}r\dd r\,.
	\end{align*}
	Only in the case $V(x)=V(|x|)$ (where the sharp value of the constant $C_\Psi$ was obtained in  \cite{L2}) the right-hand side coincides with the $L^1(\R^2)$-norm of $V$ as in the CLR bound~\eqref{eq:CLR}. 
	
	In our paper we will present a bound \eqref{eq:CLR} for potentials $V\in L^1(\R^2)$ satisfying a monotonicity condition. To be more precise, we consider the Schr{\"o}dinger operator with Aharonov--Bohm type magnetic field, i.e.~the operator
	\begin{align*}
		H_A=(-\ii\nabla +A(x))^2
	\end{align*}
	on $L^2(\R^2)$ with $A:\R^2\to\R^2$ defined, in polar coordinates, as
	\begin{align*}
		A(r,\varphi)=\frac{\psi(\varphi)}{r}(\sin\varphi,-\cos\varphi)
	\end{align*}
	where $\psi\in L^1(\Sph^1)$. The flux is given by $\Psi=(2\pi)^{-1}\int_{\Sph^1} \psi(\varphi)\dd\varphi$. Our main result is the following.

	\begin{theorem}\label{th:AB}
		Let $V\in L^1(\R^2), V\ge0$ be a potential that is non-increasing along any ray from the origin, i.e.~in polar coordinates $V(r,\phi) \ge V(r',\phi)$ for $r\le r'$. If $\Psi\notin\Z$ then the number of negative eigenvalues of the operator $H_A-V$ satisfies the inequality
		\begin{align*}
			N(H_A-V)\le  C_\Psi\int_{\R_+}\norm[L^1(\Sph^1)]{V(r,\cdot)}r\dd r =C_\Psi\int_{\R^2}V(x)\dd x
		\end{align*}
		where the constant $C_\Psi$ is independent of $V$. 
	\end{theorem}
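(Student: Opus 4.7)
The proof strategy, as announced in the abstract, is to reduce the two-dimensional problem to a one-dimensional Schr\"odinger problem with an operator-valued potential on $L^2(\Sph^1)$ and then apply a generalisation of the Calogero bound \eqref{eq:Calogero}. Passing to polar coordinates and applying the unitary $L^2(\R^2) \to L^2(\R_+, dr; L^2(\Sph^1))$ given by $f(r,\phi) \mapsto r^{1/2} f(r,\phi)$, a short computation shows that
$$
H_A = -\partial_r^2 \otimes \I + \frac{K - 1/4}{r^2}, \qquad K := (-\ii \partial_\phi + \psi(\phi))^2,
$$
where $K$ is a self-adjoint operator on $L^2(\Sph^1)$ with spectrum $\{(n - \Psi)^2 : n \in \Z\}$, and $V$ corresponds to the operator-valued multiplication $W(r) := V(r,\cdot)$ on $L^2(\Sph^1)$.

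Under the hypotheses of Theorem~\ref{th:AB}, the pointwise monotonicity of $V$ along each ray implies the operator monotonicity $W(r) \ge W(r') \ge 0$ for $r \le r'$, and $\Psi \notin \Z$ gives $K \ge c^2\,\I$ with $c := \dist(\Psi, \Z) > 0$; in particular $K - 1/4 \ge c^2 - 1/4 > -1/4$. The latter provides an effective Hardy-type repulsion at the origin and plays the role of the Dirichlet boundary condition in the scalar Calogero bound \eqref{eq:Calogero}. The proof is thus reduced to an operator-valued Calogero inequality
$$
N\Bigl(-\partial_r^2 \otimes \I + \tfrac{B}{r^2} - W(r)\Bigr) \le C \int_{\R_+} \tr_{\Hs} W(r)\, r\, dr
$$
on $L^2(\R_+; \Hs)$, valid for $B \ge c^2 - 1/4$ on $\Hs$ and $W(r)$ operator-monotone non-increasing. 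Specialising to $\Hs = L^2(\Sph^1)$, $B = K - 1/4$ and $W(r) = V(r,\cdot)$, and using Fubini to rewrite the right-hand side as $C_\Psi\, \norm[L^1(\R^2)]{V}$, yields Theorem~\ref{th:AB}.

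\textbf{Main obstacle.} The heart of the argument is the operator-valued Calogero bound above. This is genuinely stronger than a trace of the scalar bound \eqref{eq:Calogero}: the right-hand side is linear in $W$ (not $W^{1/2}$) and carries the extra weight $r$ from the two-dimensional measure, so the inequality has CLR-type rather than Bargmann-type structure and must exploit the $r^{-2}$ Hardy term in an essential way. The standard proofs of \eqref{eq:Calogero} via Pr\"ufer variables or Sturm oscillation do not transport directly to operator-valued potentials. A natural route is a Birman--Schwinger argument: use the Hardy bound to control the resolvent kernel of $-\partial_r^2 \otimes \I + B/r^2$, and then exploit the operator monotonicity of $W$ to write $W(r) = \int_r^\infty d\mu(s)$ with $d\mu(s) \ge 0$ operator-valued, reducing the estimate to the case of characteristic-function potentials $W(r) = P\,\chi_{[0,s]}(r)$ for which the radial Schr\"odinger operator can be diagonalised.
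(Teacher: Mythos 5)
Your reduction to polar coordinates and your identification of the Hardy-type repulsion coming from $\Psi\notin\Z$ are both correct and match the paper's starting point. However, there is a genuine gap: the inequality you isolate as the ``heart of the argument,''
\begin{align*}
N\Bigl(-\partial_r^2\otimes\I+\tfrac{B}{r^2}-W(r)\Bigr)\le C\int_{\R_+}\tr_{\Hs}W(r)\,r\dd r ,
\end{align*}
with $B=K-1/4$ and $W(r)=V(r,\cdot)$, is essentially a restatement of Theorem \ref{th:AB} in different coordinates, and you do not prove it. As written it does not even parse: $W(r)$ is a multiplication operator on $L^2(\Sph^1)$, hence not compact and not trace class, so $\tr_{\Hs}W(r)$ is infinite. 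The Birman--Schwinger/layer-cake sketch you offer as a possible route is not carried out and is not the mechanism by which the linear dependence on $V$ is actually obtained.

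The paper's proof closes this gap with two ideas you are missing. First, it does \emph{not} put the entire angular kinetic energy into the centrifugal term: it splits $H_A=(1-\vartheta)H_A+\vartheta H_A$, converts only the $\vartheta$-fraction into a \emph{scalar} Hardy term $\vartheta c_\Psi/r^2$ via \eqref{HardyAB}, and absorbs the remaining angular operator into the potential, $W(r)=-\frac{1-\vartheta}{r^2}\bigl(\ii\partial_\varphi+\Psi\bigr)^2+V(r,\cdot)$. The positive part $W_+(r)$ is then compact (it is the negative part of a Schr{\"o}dinger operator on $\Sph^1$), so Proposition \ref{prop:Cal} applies and yields a bound by $\int_{\R_+}\tr\bigl(W_+(r)\bigr)^{1/2}\dd r$ --- a Schatten-$1/2$ quantity, not a trace. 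Second, the conversion of this $1/2$-moment into the linear quantity $r\int_{\Sph^1}V(r,\varphi)\dd\varphi$ is exactly a Lieb--Thirring inequality for $\tr(h-v)_-^{1/2}$ on the circle with the magnetic operator $h=(\ii\,\mathrm{d}/\mathrm{d}\varphi+\Psi)^2$, which the paper proves as Lemma \ref{lem:LTa} by adapting Weidl's Neumann-bracketing argument and using that $h$ has no zero mode when $\Psi\notin\Z$. This lemma is where both the linearity in $V$ and (after scaling) the weight $r$ come from; without it, or some substitute of comparable strength, your outline cannot be completed.
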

	%----------------------

\begin{remark}
The results of Theorem \ref{th:half} and Theorem \ref{th:AB} cannot hold for general integrable potential $V$ even under the additional assumption that $V$ is smooth and has compact support. To see this, one can consider any potential $V\in\mathcal{C}_0^\infty(\R^2)$ with $\int_{\R^2}V(x)\dd x>0$. Since $0$ is a resonance state of the Schr{\"o}dinger operator on $\R^2$ there exists $u\in\mathcal{C}_0^\infty(\R^2)$ with $\sclp[2]{u,(-\Delta-V)u}<0$. Simultaneously translating $V$ and $u$ to $\widetilde{V}$ and $\widetilde{u}$ such that both are entirely supported in the half-plane $\R_+^2$ then shows that the operator in Theorem \ref{th:half} with potential $\widetilde{V}$ also has at least one negative eigenvalue. For the magnetic operator, we can argue as in \cite[Section 5]{BEL} and note that, since $\mathrm{curl} A=0$ on the simply connected set $\R_+^2$, the Poincar{\'e} transformation yields a gauge transform such that $H_A$ and $-\Delta$ are equivalent in $\R_+^2$. This shows that the operator in Theorem \ref{th:AB} with potential $\widetilde{V}$ has at least one negative eigenvalue. 
\end{remark}

Finally, we note that a positive Hardy term can also be provided by restricting the Laplacian to antisymmetric functions (see for example \cite{HL}), i.e.~by considering $-\Delta-V$ on the Hilbert space $L^2_{\mathrm{as}}(\R^2)=\set{u\in L^2(\R^2)}{u(x_1,x_2)=-u(x_2,x_1)}$.
\begin{theorem}\label{th:as}
Let $V\in L^1(\R^2), V\ge0$ be a potential that is non-increasing along any ray from the origin, i.e.~in polar coordinates $V(r,\phi) \ge V(r',\phi)$ for $r\le r'$.  Then the number of negative eigenvalues of the operator $H_{\mathrm{as}}-V=-\Delta-V$ on the space $L^2_{\mathrm{as}}(\R^2)$ of antisymmetric functions satisfies 
\begin{align*}
	N(H_{\mathrm{as}}-V) \le C \, \int_{\R^2} V(x)\, \dd x, 
\end{align*}
where the constant $C\le 5.43$ is independent of $V$.    
\end{theorem}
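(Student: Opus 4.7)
The plan is to mimic the proof of Theorem \ref{th:AB}, with the Hardy-type contribution now provided by the Dirichlet conditions on the diagonal that are built into the antisymmetry constraint, rather than by an Aharonov--Bohm flux. First I would identify $L^2_{\mathrm{as}}(\R^2)$ with $L^2(\Omega)$, where $\Omega=\set{(x_1,x_2)\in\R^2}{x_1>x_2}$, via restriction. This identification (unitary up to a factor $\sqrt 2$) sends $-\Delta$ on $L^2_{\mathrm{as}}$ to the Dirichlet Laplacian on $\Omega$, so that $H_{\mathrm{as}}-V$ becomes the Dirichlet Schr\"odinger operator on $\Omega$. Since the quadratic form of $V$ on antisymmetric functions depends only on the symmetrisation $V_{\mathrm{sym}}(x_1,x_2)=\tfrac12(V(x_1,x_2)+V(x_2,x_1))$, which preserves positivity and monotonicity along rays from the origin (as the reflection $(x_1,x_2)\mapsto(x_2,x_1)$ permutes such rays), we may assume that $V$ itself is symmetric.

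Passing to polar coordinates, $\Omega$ corresponds to the sector $\set{(r,\varphi)}{r>0,\ \varphi\in I}$ for an interval $I$ of length $\pi$. After the standard substitution $u(r,\varphi)=r^{-1/2}f(r,\varphi)$, the Dirichlet Laplacian on $\Omega$ takes the form
\begin{equation*}
	-\partial_r^2+\frac{1}{r^2}\Bigl(-\partial_\varphi^2-\tfrac14\Bigr)
\end{equation*}
on $L^2(\R_+\times I,\dd r\dd\varphi)$. The angular operator $-\partial_\varphi^2$ on $I$ with Dirichlet conditions has eigenvalues $k^2$, $k\in\N$, so the shifted operator $-\partial_\varphi^2-1/4$ has strictly positive lowest eigenvalue $3/4$. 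This is the Hardy-type term that plays, in this setting, the role of $\mathrm{dist}(\Psi,\Z)^2-1/4$ in the proof of Theorem \ref{th:AB}.

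I would then apply the operator-valued Calogero-type bound developed in the paper in the course of proving Theorem \ref{th:AB}. Treating $V(r,\cdot)$, for each fixed $r>0$, as a bounded multiplication operator on the angular Hilbert space $L^2(I)$, the pointwise monotonicity of $V$ in $r$ (equivalent to the hypothesis that $V$ is non-increasing along rays from the origin) translates into monotonicity in the operator ordering, so the hypotheses of that bound are met. Its application yields
\begin{equation*}
	N(H_{\mathrm{as}}-V)\le C\int_0^\infty\norm[L^1(I)]{V(r,\cdot)}\,r\dd r,
\end{equation*}
and by the symmetry of $V$ the right-hand side is a fixed multiple of $\int_{\R^2}V\dd x$, which gives the claimed inequality.

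The main obstacle will be pinning down the explicit constant $C\le 5.43$. This requires optimising the Birman--Schwinger reformulation of the operator-valued Calogero bound against the specific Hardy constant $3/4$, and efficiently summing the contributions of the higher angular modes, whose Hardy constants $k^2-1/4$ are larger but which receive the same $L^1$-mass of $V$ from the angular integration.
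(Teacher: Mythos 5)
Your reduction to the Dirichlet Schr\"odinger operator on the half-plane $\{x_1>x_2\}$, the symmetrisation of $V$, and the derivation of a Hardy term from the lowest angular Dirichlet eigenvalue on a sector of opening $\pi$ are all sound, and up to that point you are essentially on the paper's route (the paper instead quotes the Hardy inequality for antisymmetric functions, keeps the measure $r\,\dd r$, and arrives at the analogue of \eqref{eq:thm2} with $\Psi=0$, $c_\Psi=1$).

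The gap is in how you invoke the operator-valued Calogero bound. Proposition \ref{prop:Cal} and Theorem \ref{th:OpCal} require the operator-valued potential $\mathcal V(r)$ to take values in $\mathfrak{S}_{1/2}$, and their conclusion is phrased in terms of $\int_0^\infty\sqrt{\norm[\mathfrak{S}_{1/2}]{\mathcal V(r)}}\,\dd r$. You propose to take $\mathcal V(r)$ to be the multiplication operator by $V(r,\cdot)$ on $L^2(I)$; this operator is not compact, its $\mathfrak{S}_{1/2}$-norm is infinite, and no result in the paper converts such a quantity into $\int_0^\infty\norm[L^1(I)]{V(r,\cdot)}\,r\dd r$. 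The missing idea is to retain a fraction $(1-\vartheta)$ of the \emph{angular} kinetic energy inside the operator-valued potential, i.e.\ to set $\mathcal V(r)=W_+(r)$ with $W(r)=\frac{1-\vartheta}{r^2}\partial_\varphi^2+V(r,\cdot)$, splitting off only the scalar Hardy term $\vartheta c/r^2$. Then $W_+(r)$ is compact and non-increasing in $r$, and---since the angular functions vanish at the endpoints of $I$ and may be extended by zero to the whole line---the sharp one-dimensional Lieb--Thirring inequality \eqref{hlt} yields $\sqrt{\norm[\mathfrak{S}_{1/2}]{W_+(r)}}=\Tr W_+(r)^{1/2}\le \frac{r}{2\sqrt{1-\vartheta}}\int_I V(r,\varphi)\,\dd\varphi$; this is precisely where the factor $r$ in $\int_{\R^2} V\,\dd x=\int_0^\infty\norm[L^1]{V(r,\cdot)}\,r\dd r$ comes from. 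Your closing suggestion of summing contributions of higher angular modes cannot substitute for this step, since $V$ is not radial and the angular modes do not decouple. With the Lieb--Thirring ingredient inserted, your argument becomes the paper's proof.
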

Substantial efforts  were made (so far without any success) in finding necessary and sufficient conditions on a class of potentials that provide a finite number of negative eigenvalues for a two-dimensional Schr{\"o}dinger operator and also necessary and sufficient conditions for the validity of the Weyl asymptotics. In \cite{BL} the authors gave examples of different non-Weyl law formulae under the condition $V\in L^1(\R^2)$. Some upper estimates for $N(H_0-V)$ in the two-dimensional case were obtained in papers \cite{GN, KS, LS1, LS2, MV, MW, St}. In \cite{Sh} the author gives estimates for the number of negative eigenvalues of a two-dimensional Schr{\"o}dinger operator in terms of $L \log L$ type Orlicz norms of the potential and proves a conjecture by N.~N.~Khuri, A.~Martin and T.~T.~Wu \cite{KMW} (see also \cite{CKMW}). The fact that the classes $L \log L$ of potential functions are relevant to estimates for $N(H_0-V)$ was first discovered by M.~Solomyak \cite{Sol}.

The proofs of Theorem \ref{th:half}, Theorem \ref{th:AB} and Theorem \ref{th:as} all use the same central idea and employ the ``lifting argument'' presented in \cite{LW1} (see also \cite{HLW} and \cite{FLW}). In each setting a Hardy inequality holds, which allows us to add a Hardy term at the expense of reducing the `kinetic' part of the operator. Using the structure of the operators involved, all three problems can then be reduced to studying corresponding one-dimensional differential operators with operator-valued potentials. The Hardy term allows us to prove Calogero type bounds for these operators in terms of $1/2$ Schatten norms of the operator-valued potentials. Lastly, upper bounds on these norms can be established through Lieb--Thirring type inequalities \cite{LT}. To this end, in the case of Theorem \ref{th:AB}, we will prove a new inequality of this type for the magnetic operator on $L^2(\Sph^1)$, following the proof in \cite{W} of a similar bound on $L^2(\R)$. Note that for such operators in \cite{DELL} the authors obtained a range of Keller--Lieb--Thirring inequalities for the lowest eigenvalue. In our proofs we will also show that the operators involved can be defined as semibounded, self-adjoint operators.  

As a first step we will consider one-dimensional differential operators with operator-valued potentials in Section \ref{sec:CalOp}. Under the assumed presence of an additional Hardy term, we will establish Calogero-type bounds for these operators.  As a corollary, we will obtain a generalisation of  Calogero's bound \eqref{eq:Calogero} to operator-valued potentials. Sections \ref{sec:half}, \ref{sec:AB} and \ref{sec:as} then contain the proofs of our theorems.
	
	%%%%%%%%%%%%%%%%%%%%%
	
\section{Calogero type bounds for 1D operators with operator-valued potentials and Hardy term}\label{sec:CalOp}	
Let $\mathcal H$ be a separable Hilbert space with inner product $\sclp[\Hs]{\cdot,\cdot}$ and let $\mathcal V(t)$, $t\ge0$, be an operator-valued function (potential), whose values are compact, non-negative self-adjoint operators in 
$\mathcal H$.
By $\mathfrak{S}_p$, $p>0$,  we denote the Schatten class of compact operators whose s-numbers satisfy the inequality 
\begin{align*}
\sum_{n=1}^\infty  s_n^p <\infty.
\end{align*}
As usual $\mathfrak{S}_\infty$ denotes the class of compact operators. 
For $f\in\mathcal{C}^\infty(\overline{\R_+})$ with $f(t)>0,r\in\R_+$ we consider the Hilbert space $L^2(\R_+,f(t)\dd t;\Hs)$. For our applications it will be sufficient to consider $f(t)=1$ and $f(t)=t$, corresponding to the radial part of the Laplacian in one and two dimensions, respectively. We will prove the results in slightly more generality though, assuming that $f$ is non-decreasing and a polynomial. We expect the results to also hold if the latter assumption is weakened, for example to that of \cite[Theorem 15.2]{We} along with the assumption that $\sup_{k\in\Z}f(a^{k+1})/f(a^k)<\infty$ for some $a>1$.  The subspace $\mathcal{C}^\infty_0(\R_+;\Hs)$, i.e.~the set of  functions $U:\R_+\to\Hs$ which are smooth (with respect to the topologies on $\R_+$ and $\Hs$) with compact support in $\R_+$, is dense in $L^2(\R_+, f(t)\dd t;\Hs)$. If $\mathcal{V}\in L^1(\R_+;\mathfrak{S}_\infty)$ then, for any $b\ge0$, the quadratic form
\begin{align*}
\int_{\R_+}\left(\norm[\Hs]{\frac{\mathrm{d}}{\mathrm{d}t}U(t)}^2+\frac{b}{t^2}\norm[\Hs]{U(t)}^2-\sclp[\Hs]{U(t),\mathcal{V}(t)U(t)}\right)f(t)\dd t
\end{align*}
is well-defined on $\mathcal{C}^\infty_0(\R_+;\Hs)$. It corresponds to the symmetric operator
\begin{align*}
H_b-\mathcal{V}=-\frac{1}{f(t)}\frac{\mathrm{d}}{\mathrm{d}t} f(t)\frac{\mathrm{d}}{\mathrm{d}t}\otimes\id +\frac{b}{t^2}\otimes\id-\mathcal{V}\,.
\end{align*}
If $\mathcal{V}\in L^1(\R_+;\mathfrak{S}_\infty)$, then the quadratic form is semibounded. We can thus consider the self-adjoint Friedrichs extension of $H_b-\mathcal{V}$, which we refer to as the operator with Dirichlet boundary condition and which we continue to denote by $H_b-\mathcal{V}$. 
Finally, we say that the family of operators $\mathcal{V}(t)$ is non-increasing if $\mathcal V(t)\ge \mathcal V(t')$ for $0\le t\le t'$  in the usual sense of quadratic forms. Namely, for any fixed vector $U\in \mathcal H$
$$
\sclp[\Hs]{U,\mathcal{V}(t) U}  \ge \sclp[\Hs]{U,\mathcal{V}(t')U} , \qquad 0\le t\le t'\,.
$$
\begin{proposition}\label{prop:Cal}
	Let $\Hs$ and $f$ be as above. Furthermore let $\mathcal{V}\in L^1(\R_+;\mathfrak{S}_{1/2}),\mathcal{V}\ge0$ be non-increasing. Then the number of negative eigenvalues of the operator $H_b-\mathcal{V}$ with Dirichlet boundary condition satisfies
	\begin{align*}
	N(H_b-{\mathcal{V}})\le C_{b,f}\int_{\R_+}\sqrt{\norm[\mathfrak{S}_{1/2}]{\mathcal{V}(t)}} \dd t
	\end{align*}
	where the constant $C_{b,f}$ is independent of $\mathcal{V}$. 
	\end{proposition}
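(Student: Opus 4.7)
The plan is to combine three ingredients: a scalar Calogero-type bound with a Hardy correction, the monotonicity of individual eigenvalues of $\mathcal{V}(t)$, and a reduction of the operator-valued count to a sum of scalar counts via the Birman--Schwinger principle. First I would prove the scalar analog: for any non-increasing $v:\R_+\to[0,\infty)$,
\begin{align*}
N\Bigl(-\tfrac{1}{f}\partial_t f\partial_t + \tfrac{b}{t^2} - v\Bigr)\le c_{b,f}\int_0^\infty \sqrt{v(t)}\,\dd t.
\end{align*}
By Sturm oscillation the left-hand side equals the number of zeros in $\R_+$ of the zero-energy Dirichlet solution $u$. The Hardy term together with the non-decreasing weight $f$ control $u$ near the origin; between consecutive zeros $t_k<t_{k+1}$, the Poincar\'e / min-max inequality on $(t_k,t_{k+1})$ combined with the monotonicity of $v$ yields $\pi^2/(t_{k+1}-t_k)^2\lesssim v(t_k)$, and summing gives the scalar bound once $v$ is monotone.

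Next, since $\mathcal{V}(t)$ is non-increasing in the operator sense, the min-max characterization forces each of its ordered eigenvalues $\lambda_1(t)\ge \lambda_2(t)\ge\cdots$ to be a non-increasing function of $t$. To pass to operator-valued potentials I would invoke the Birman--Schwinger principle:
\begin{align*}
N(H_b-\mathcal{V})\le \#\{n: s_n(K)\ge 1\}\le \sum_n s_n(K)^{1/2},\qquad K := \mathcal{V}^{1/2}H_b^{-1}\mathcal{V}^{1/2}.
\end{align*}
Writing $K = BB^*$ with $B = \mathcal{V}^{1/2}H_b^{-1/2}$, one has $\sum_n s_n(K)^{1/2} = \norm[\mathfrak{S}_1]{B}$, and this is to be estimated channel-by-channel by applying the scalar bound of the first step to each monotone eigenvalue $\lambda_n(t)$. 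Combining yields
\begin{align*}
N(H_b-\mathcal{V}) \le c_{b,f}\sum_n\int_0^\infty \sqrt{\lambda_n(t)}\,\dd t = c_{b,f}\int_0^\infty \sqrt{\norm[\mathfrak{S}_{1/2}]{\mathcal{V}(t)}}\,\dd t,
\end{align*}
using the identity $\norm[\mathfrak{S}_{1/2}]{\mathcal{V}(t)}^{1/2} = \sum_n \lambda_n(t)^{1/2}$.

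The main obstacle is the reduction from the operator-valued problem to a sum of scalar problems. A direct eigenvector expansion $U(t) = \sum_n u_n(t)\phi_n(t)$ is delicate because the eigenprojections of $\mathcal{V}(t)$ need not depend regularly on $t$, and attempting to use the $\phi_n(t)$ as a trial frame introduces uncontrolled cross-terms coming from $\phi_n'(t)$. The Birman--Schwinger route avoids this by working pointwise with $\mathcal{V}^{1/2}$, but translating the resulting $\mathfrak{S}_{1/2}$-quasi-norm of $K$ into the pointwise expression $\int\sqrt{\norm[\mathfrak{S}_{1/2}]{\mathcal{V}(t)}}\,\dd t$ requires a careful channel decomposition in the spirit of the Laptev--Weidl lifting. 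The Hardy term is essential throughout: it ensures $H_b$ is positive, gives the Green's function the decay needed to make the Schatten estimates finite, and enables the scalar Calogero bound of the first step.
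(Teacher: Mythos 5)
There is a genuine gap, and it sits exactly where you place your ``main obstacle''. Your reduction to scalar problems is never actually carried out: the Birman--Schwinger operator $K=\mathcal{V}^{1/2}H_b^{-1}\mathcal{V}^{1/2}$ does not split into a direct sum of scalar channels, because the resolvent $H_b^{-1}$ is non-local in $t$ while the eigenprojections of $\mathcal{V}(t)$ rotate with $t$; ``working pointwise with $\mathcal{V}^{1/2}$'' does not decouple anything. Worse, the quantity you propose to bound, $\sum_n s_n(K)^{1/2}=\norm[\mathfrak{S}_1]{B}$, is generically \emph{infinite} for precisely the class of potentials in the proposition. Indeed, replacing $\mathcal{V}$ by $\lambda\mathcal{V}$ and using the Weyl/Calogero behaviour $N(H_b-\lambda\mathcal{V})\sim c\sqrt{\lambda}$ shows $\#\{n:s_n(K)\ge 1/\lambda\}\sim c\sqrt{\lambda}$, i.e.\ $s_n(K)\sim (c/n)^2$, so that $\sum_n s_n(K)^{1/2}\sim\sum_n c/n$ diverges. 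The chain $N\le\#\{n:s_n(K)\ge1\}\le\sum_n s_n(K)^{1/2}$ is formally valid but bounds $N$ by $+\infty$; this is the standard reason why critical-exponent CLR/Calogero bounds cannot be obtained by summing Schatten quasi-norms of the Birman--Schwinger operator. Your scalar first step (Sturm oscillation plus monotonicity) is fine in itself — it is essentially Calogero's original argument — but it never gets applied, because the operator-valued problem is never legitimately reduced to it.

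For contrast, the paper avoids both difficulties by a dyadic Neumann bracketing. One splits $\R_+$ into intervals $(2^k,2^{k+1})$, uses the monotonicity of $\mathcal{V}$ to replace $\mathcal{V}(t)$ on each interval by the \emph{constant} operator $\mathcal{V}(2^k)$ (which dominates from above), and imposes Neumann conditions. On each interval the frozen potential can be diagonalised in a fixed basis, so the channel decomposition is trivial there; the Hardy term $b/t^2\ge b\,2^{-2k-2}$ eliminates the Neumann zero modes, so each channel contributes at most $R(b)2^k\sqrt{f(2^{k+1})/f(2^k)}\sqrt{\mu_n(2^k)}$ rather than at least $1$, and the sum over $n$ produces $\sqrt{\norm[\mathfrak{S}_{1/2}]{\mathcal{V}(2^k)}}$. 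Monotonicity is then used a second time to dominate the resulting dyadic sum by the integral $\int_{\R_+}\sqrt{\norm[\mathfrak{S}_{1/2}]{\mathcal{V}(t)}}\dd t$. If you want to salvage your outline, the freezing-on-intervals step is the ingredient you are missing: it is what converts the time-dependent eigenbasis problem into finitely many constant-coefficient ones.
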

\begin{proof}
	For $U\in\mathcal{C}_0^\infty(\R_+;\Hs)$ there exists $k\in\N$ such that $\supp(U)\subset[2^{-k},2^{k}]$ and thus 
	\begin{align*}
		&\sclp[2]{U,(H_b-\mathcal{V})U}\\
		&=\sum_{k\in\Z}\int_{2^k}^{2^{k+1}}
		\left(\norm[\Hs]{U'(t)}^2+\frac{b}{t^{2}}\norm[\Hs]{U(t)}^2-\sclp[\Hs]{U(t),\mathcal{V}(t)U(t)}\right) f(t)\dd t\\
		&\ge \sum_{k\in\Z}\int_{2^k}^{2^{k+1}}\left(\norm[\Hs]{U'(t)}^2+\frac{b}{t^{2}}\norm[\Hs]{U(t)}^2-\frac{f(2^{k+1})}{f(2^k)}\sclp[\Hs]{U(t),\mathcal{V}(2^k)U(t)}\right) f(2^k)\dd t\,.
	\end{align*}
	Let $B_k:=\set{t\in\R_+}{t\in (2^k,2^{k+1})}$. Each of the quadratic forms above is closed on $H^1(B_k;\Hs)$. We can thus consider the corresponding self-adjoint operator $h_k$. By the variational principle $H_b-\mathcal{V}\ge\bigoplus_{k\in\Z}h_k$.

	Now look on each $h_k$ and see
	\begin{align*}
		&N(h_k)\\
		&=\sup\dim \set{F\subset H^1(B_k)}{ \int_{2^k}^{2^{k+1}}\left(\norm[\Hs]{U'}^2+\frac{b}{r^{2}}\norm[\Hs]{U}^2-\frac{f(2^{k+1})}{f(2^k)}\sclp[\Hs]{U,\mathcal{V}(2^k)U}\right)\dd t<0, U\in F}\\
		&=\sup\dim \set{F\subset H^1(B_0)}{ \int_{1}^{2}\left(\norm[\Hs]{U'}^2+\frac{b}{s^2}\norm[\Hs]{U}^2-2^{2k}\frac{f(2^{k+1})}{f(2^k)}\sclp[\Hs]{U,\mathcal{V}(2^k)U}\right) \dd s<0, U\in F}\\
		&\le\sup\dim \set{F\subset H^1(B_0)}{\int_{1}^{2}\left(\norm[\Hs]{U'}^2+\frac{b}{4}\norm[\Hs]{U}^2-2^{2k}\frac{ f(2^{k+1})}{f(2^k)}\sclp[\Hs]{U,\mathcal{V}(2^k)U}\right)\dd s<0, U\in F}.	\end{align*}
On $L^2((1,2);\Hs)$ we now consider the operator with constant coefficients 
\begin{align*}
-\frac{\mathrm{d}^2}{\mathrm{d}s^2}\otimes\id + \frac{b}{4}\otimes\id - 2^{2k}\frac{f(2^{k+1})}{f(2^k)}\mathcal{V}(2^k)
\end{align*}
and Neumann boundary conditions in the eigenbasis of the compact, non-negative operator $\mathcal V(2^k)$. Denoting by $\mu_n(2^k)\ge 0$ the eigenvalues of $\mathcal V(2^k)$ we find
\begin{align*}
		N(h_k)\le \sum_{n}\left(\# \set{m\in \mathbb{N}_0}{\pi^2 m^2+\frac{b}{4}< 2^{2k}\frac{f(2^{k+1})}{f(2^k)}\mu_{n}(2^k)}\right).
\end{align*}
Each of the summands above can be bounded by $R(b)2^k\sqrt{f(2^{k+1})/f(2^k)}\sqrt{\mu_n(2^k)}$ with some constant $R(b)$ depending only on $b$ (see Remark \ref{Re:Constant} below for an explicit choice) and thus
\begin{align*}
		N(h_k)\leq R(b)2^k\sqrt{\frac{f(2^{k+1})}{f(2^k)}} \sum_n \sqrt{\mu_n(2^k)}
		\leq R(b)2^k\sqrt{\frac{f(2^{k+1})}{f(2^k)}}\sqrt{\norm[\mathfrak{S}_{1/2}]{\mathcal{V}(2^k)}}.
\end{align*}
Importantly $\mathcal{V}(t)$ is a non-increasing operator-valued function in $t$ and thus
	\begin{align*}
	N(H_b-\mathcal{V})\le\sum_{k\in\Z}N(h_k)&\le R(b)\sum_{k\in\Z}2^k\sqrt{\frac{f(2^{k+1})}{f(2^k)}}\sqrt{\norm[\mathfrak{S}_{1/2}]{\mathcal{V}(2^k)}}\\
	&\le 2R(b)\sup_{k\in\Z}\sqrt{\frac{f(2^{k+1})}{f(2^k)}}\int_{\R_+}\sqrt{\norm[\mathfrak{S}_{1/2}]{\mathcal{V}(t)}}\dd t\,.
	\end{align*}
\end{proof}
	\begin{remark}[Explicit constant]\label{Re:Constant}
		Here we derive an explicit upper bound on $R(b)$: Consider the inequality $(\alpha m^2+\beta)/(m+1)^2\ge \alpha\beta/(\alpha+\beta)$, for $\alpha,\beta,m\geq 0$. Then for sets of the type above, we find  
		\begin{align*}
			\#\{ m\in \mathbb{N}_0: \, \alpha m^2  + \beta < \gamma^2 \}&\le \#\{ m\in \mathbb{N}_0: \, (\alpha m^2  + \beta)(m+1)^2/(m+1)^2  < \gamma^2 \}\\
			&\le \#\{ m\in \mathbb{N}: \, (\alpha\beta/(\alpha+\beta))m^2< \gamma^2 \}\le \gamma\sqrt{\frac{\alpha+\beta}{\alpha\beta}} . 
		\end{align*}
		A direct consequence of this is the upper bound  $R(b)\le\sqrt{\frac{b+4 \pi ^2}{\pi^2 b}}$.
		We note that we can consider a more general scheme, where we split $\mathbb{R}_+$ by the intervals $(a^k,a^{k+1})$, with $a>1$. Applying the above to this general setting, we see that 
		 \begin{align*}
		     C_{b,f}\le aR\!\left(\frac{4 (a-1)^2 b}{a^2}\right)\sup_{k\in\Z}\sqrt{\frac{f(a^{k+1})}{f(a^k)}}
		 \end{align*}
		 where $C_{b,f}$ is the constant in Proposition 5. 
	\end{remark}
An immediate consequence is a generalisation of Calogero's bound \eqref{eq:Calogero} to operator-valued potentials.
\begin{theorem}\label{th:OpCal}
	Let $\mathcal{V}\in L^1(\R_+;\mathfrak{S}_{1/2}),\mathcal{V}\ge0$ be non-increasing. Then the number of negative eigenvalues of the  operator $-\frac{\mathrm{d}^2}{\mathrm{d}t^2}\otimes\id-\mathcal{V}$ with Dirichlet boundary condition satisfies
\begin{align*}
	N\!\left(-\frac{\mathrm{d}^2}{\mathrm{d}t^2}\otimes\id-{\mathcal V}\right) \le C\, \int_{\R_+} \sqrt{\norm[\mathfrak{S}_{1/2}]{\mathcal V(t)}}\dd t,
\end{align*}
	where $C\le 8.63$ is a constant independent of $\mathcal V$.
\end{theorem}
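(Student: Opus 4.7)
The plan is to derive Theorem \ref{th:OpCal} directly from Proposition \ref{prop:Cal} by generating the Hardy term $b/t^2$ out of the Dirichlet kinetic energy itself. The classical one-dimensional Hardy inequality for functions vanishing at the origin states that $\int_{\R_+}|U'(t)|^2\dd t\ge \tfrac14\int_{\R_+}|U(t)|^2/t^2\dd t$. Tensoring with the identity on $\Hs$ and splitting the kinetic term, for any parameter $\eta\in(0,1)$ we have, in the sense of quadratic forms on $\mathcal{C}_0^\infty(\R_+;\Hs)$,
\begin{align*}
-\frac{\mathrm{d}^2}{\mathrm{d}t^2}\otimes\id\;\ge\;(1-\eta)\left(-\frac{\mathrm{d}^2}{\mathrm{d}t^2}\right)\otimes\id+\frac{\eta}{4t^2}\otimes\id.
\end{align*}

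Subtracting $\mathcal{V}$ and factoring out $(1-\eta)$, I would then write
\begin{align*}
-\frac{\mathrm{d}^2}{\mathrm{d}t^2}\otimes\id-\mathcal{V}\;\ge\;(1-\eta)\left(-\frac{\mathrm{d}^2}{\mathrm{d}t^2}\otimes\id+\frac{b}{t^2}\otimes\id-\frac{\mathcal{V}}{1-\eta}\right),
\end{align*}
where $b=\eta/(4(1-\eta))$. By the variational principle the number of negative eigenvalues of the left-hand side is bounded by the number of negative eigenvalues of the operator in parentheses, which is precisely of the form handled by Proposition \ref{prop:Cal} with $f\equiv 1$ applied to the rescaled potential $\mathcal{V}/(1-\eta)$ (which is still non-negative, non-increasing, and in $L^1(\R_+;\mathfrak{S}_{1/2})$). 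This yields
\begin{align*}
N\!\left(-\frac{\mathrm{d}^2}{\mathrm{d}t^2}\otimes\id-\mathcal{V}\right)\le \frac{C_{b,1}}{\sqrt{1-\eta}}\int_{\R_+}\sqrt{\norm[\mathfrak{S}_{1/2}]{\mathcal{V}(t)}}\dd t.
\end{align*}

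The only work that remains is to bound the overall constant by $8.63$. For this I would use the explicit estimate from Remark \ref{Re:Constant} with $f\equiv 1$, namely $C_{b,1}\le a\, R(4(a-1)^2b/a^2)$ with $R(c)\le\sqrt{(c+4\pi^2)/(\pi^2 c)}$, and optimise jointly over the dyadic scaling parameter $a>1$ and over $\eta\in(0,1)$ (equivalently $b>0$). A short computation shows that for fixed $a$ the optimal $b$ equals $\pi a/(2(a-1))$, after which one minimises the resulting one-parameter expression $a((2\pi+1)a-1)/(\pi(a-1))$ in $a$; the critical point is $a=1+\sqrt{2\pi/(2\pi+1)}\approx 1.93$, which produces the numerical value $\approx 8.625$. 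The main obstacle is purely computational---tracking how the Hardy splitting interacts with the dyadic decomposition inside Proposition \ref{prop:Cal} so that the optimisation over $(a,\eta)$ is done over the correct closed-form bound rather than a loose intermediate one; there is no further analytic difficulty beyond Hardy plus the already-established Proposition \ref{prop:Cal}.
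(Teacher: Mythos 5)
Your proposal is correct and coincides with the paper's own proof: the authors likewise split the kinetic term as $-(1-\vartheta)\,\mathrm{d}^2/\mathrm{d}t^2-\vartheta\,\mathrm{d}^2/\mathrm{d}t^2$, apply the scalar Hardy inequality to the $\vartheta$-part to produce the term $b/t^2$ with $b=\vartheta/(4(1-\vartheta))$, invoke Proposition \ref{prop:Cal} with $f\equiv1$ for the rescaled potential $\mathcal{V}/(1-\vartheta)$, and optimise over the interval ratio $a$ and the splitting parameter, arriving at the same critical values $a\approx 1.9288$, $\vartheta\approx 0.9288$ and the bound $C\le 8.63$. Your closed-form optimisation (optimal $b=\pi a/(2(a-1))$ and minimisation of $a((2\pi+1)a-1)/(\pi(a-1))$) reproduces exactly the numerics of Remarks \ref{Re:Constant} and \ref{Re:constant2}.
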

\begin{proof} 
If $U\in\mathcal{C}_0^\infty(\R_+;\Hs)$, we can apply the standard Hardy inequality on $\R_+$ and obtain 
	\begin{align*}
		\int_{\R_+}
		\norm[\Hs]{U'(t)}^2 \dd t
		\ge \frac{1}{4} \, \int_{\R_+} \frac{\norm[\Hs]{U(t)}^2}{t^2}\dd t.
	\end{align*}
	
	Let $0< \vartheta<1$. Splitting the operator of the second derivative $-\mathrm{d}^2/\mathrm{d}t^2= -(1-\vartheta) \mathrm{d}^2/\mathrm{d}t^2 - \vartheta \mathrm{d}^2/\mathrm{d}t^2$ and using the Hardy inequality we have (with $f(t)=1$)
\begin{align*}
N(H_0-\mathcal{V}) \le N((1-\vartheta)H_{\vartheta/(1-\vartheta)}-\mathcal{V})\,,
\end{align*}
where we used that $\mathcal{C}_0^\infty(\R_+;\Hs)$ is a form-core of the operators involved.
We can now apply the proposition to obtain the desired result with $C=2R(\vartheta/4(1-\vartheta))(1-\vartheta)^{-1/2}$.

\end{proof}

\begin{remark}[Explicit constant]\label{Re:constant2}
    We find the upper bound $C\le 8.63$ by following Remark \ref{Re:Constant}. Dividing into intervals of order $a$ and optimising over $a$ and $\vartheta$ yields the best value when $a=1.92882$ and $\vartheta=0.928815$. The best $\vartheta$ being close to $1$ is, we believe, a consequence of our proof method despite even small $\vartheta$ being sufficient to remove the zero-modes in the Neumann bracketing argument.
\end{remark}

\noindent {\it Open problem.} We do not believe the constant $8.63$ in Theorem \ref{th:OpCal} to be optimal  and thus it would be interesting to find the sharp constant. If $\mathcal{V}$ is diagonal, then the bound holds with the same sharp constant $2/\pi$ as in the scalar case.

\section{Proof of Theorem \ref{th:half}}\label{sec:half}

To prove Theorem 1 we use the ``lifting argument'' developed in \cite{LW1}. Noting that $L^2(\R_+^2)$ is isomorphic to $L^2(\R_+;L^2(\R))$ and, using the structure of the Laplacian, we write 
\begin{align*}
H_0 - V =  -\frac{\mathrm{d}^2}{\mathrm{d}x_1^2} \otimes \id - W(x_1),
\end{align*}
where 
\begin{align*}
W(x_1) = \frac{\mathrm{d}^2}{\mathrm{d}x_2^2} + V(x_1,x_2) 
\end{align*}
is an operator-valued potential in $L^2(\R)$. By Fubini's theorem $V(x_1,\cdot)\in L^1(\R)$ for almost all $x_1\in\R_+$ and thus $W(x_1)$ is a self-adjoint operator on the domain $H^1(\R)$ for almost all $x_1\in\R_+$. As we will see below, its positive part  $W_+(x_1)$ is a compact operator. Furthermore on $\mathcal{C}_0^\infty(\R_+; H^1(\R))\subset L^2(\R_+;L^2(\R))$ the operator inequality
\begin{align*}
     -\frac{\mathrm{d}^2}{\mathrm{d}x_1^2} \otimes \id - W(x_1)
     \ge -\frac{\mathrm{d}^2}{\mathrm{d}x_1^2} \otimes \id - W_+(x_1)
\end{align*}
holds. As discussed in Section \ref{sec:CalOp} the operator on the right-hand side is semibounded and consequently the same holds true for the former operator on the left-hand side. We can thus consider their respective self-adjoint Friedrichs extensions, which we continue to denote by the same symbols and which still satisfy the operator inequality. This establishes that the operator in Theorem \ref{th:half} is well-defined.
By the variational principle 
\begin{align*}
N(H_0-V) \le N\left(-\frac{\mathrm{d}^2}{\mathrm{d}x_1^2} \otimes \id - W_+(x_1)\right).
\end{align*}
Note that $W_+(x_1)$  is a non-increasing operator-valued function. 
Applying Theorem \ref{th:OpCal} we find 
\begin{align}
N(H_0-V) \le C \int_{\R_+} \sqrt{\norm[\mathfrak{S}_{1/2}]{W_+(x_1)}} \dd x_1.
\label{OperCal}
\end{align}
For each $x_1$ we use the sharp result on the $1/2$ moments for a one-dimensional Schr{\"o}dinger operator \cite{HLT} (see also \cite{HLW}) saying that, for $v\in L^1(\R), v\ge0$, 
\begin{align}\label{hlt}
\Tr\left( -\frac{\mathrm{d}^2}{\mathrm{d}t^2} - v(t)\right)_-^{1/2} \le \frac12\, \int_{\R} v(t)\dd t.
\end{align}
Applying \eqref{hlt} to the operator $W_+(x_1)$ we have 
\begin{align*}
\sqrt{\norm[\mathfrak{S}_{1/2}]{W_+(x_1)}} \le \frac12\, \int_{\R} V(x_1,x_2) \dd x_2.
\end{align*}
Combining the latter inequality with \eqref{OperCal} completes the proof of Theorem 1.
\begin{remark}[Explicit constant]
    From Remark \ref{Re:constant2}, the best constant known to us is $C\approx 4.31244$.
\end{remark}	

\section{Proof of theorem \ref{th:AB}}\label{sec:AB}
Since the operator with flux $\psi$ is gauge equivalent to the operator with constant flux $\Psi$, we assume without loss of generality that $\psi=\Psi$. In polar coordinates the quadratic form of $H_A$ on $\mathcal{C}_0^\infty(\R^2\setminus\{0\})$  is given by
	\begin{align*}
		\sclp[2]{u,H_A u}=\int_{0}^\infty\int_{\Sph^1}\left(\left|\frac{\partial}{\partial r}u(r,\varphi)\right|^2
		+\frac{1}{r^2}\left|\ii\frac{\partial}{\partial \varphi}u(r,\varphi)+\Psi u(r,\varphi)\right|^2\right)\dd\varphi\, r\dd r
		%-V(r,\varphi)|u(r,\varphi)|^2\right)r\dd r%+\frac{1}{r^2}\left(\ii\frac{\partial}{\partial \varphi}+\Psi\right)^2\right) + \vartheta\frac{c_\Psi}{r^2}-V\right)
	\end{align*}
and on this space the Hardy inequality \cite{LW2} 
	\begin{align}
		\sclp[2]{u, H_A u}
		%\ge c_\Psi \int_{\R^2}\frac{|u(x)|^2}{|x|^2}\dd x
		\ge c_\Psi \int_{\R_+}\int_{\Sph^1}\frac{|u(r,\varphi)|^2}{r^2}\dd\varphi\, r\dd r
		\label{HardyAB}
	\end{align}
	holds with $c_\Psi=\min_{k\in\Z}|\Psi+k|^2$. 
	
Let $0<\vartheta<1$. Splitting the operator $H_A= (1-\vartheta) H_A + \vartheta H_A$ and using \eqref{HardyAB} we have the operator inequality
	\begin{align*}
		H_A-V \ge \widetilde{H}_A -V,
	\end{align*}
	on $\mathcal{C}_0^\infty(\R^2\setminus\{0\})$ with 
	\begin{align*}
		\widetilde{H}_{A}  = (1-\vartheta) H_A + \vartheta\frac{c_\Psi}{|x|^2}\,.
	\end{align*}
	The space $L^2(\R_+\times\Sph^1,r\dd r\dd\varphi)$ is isomorphic to  $L^2(\R_+, r\dd r; L^2(\Sph^{1}))$ which allows us to write
	\begin{align}\label{eq:thm2}
	    \widetilde{H}_A -V
	    =-(1-\vartheta)\frac{1}{r}\frac{\mathrm{d}}{\mathrm{d}r}r\frac{\mathrm{d}}{\mathrm{d}r}\otimes\id+\vartheta\frac{c_\Psi}{r^2}\otimes\id- W(r)
	\end{align}
	with the operator-valued potential 
	\begin{align*}
	W(r)=-\frac{1-\vartheta}{r^2}\left(\ii\frac{\partial}{\partial \varphi}+\Psi\right)^2+V(r,\varphi)\,.
	\end{align*} 
	For almost all $r\in\R_+$, the potential is self-adjoint on $H^1(\Sph^1)$ by the assumptions on $V$ and we can consider its positive part $W_+(r)$, which is compact as we will show below.  Arguing similarly as in Section \ref{sec:half} the operators above are all semibounded and we can thus consider their respective Friedrichs extensions. This establishes that the operator $H_A-V$ in Theorem \ref{th:AB} is well-defined. Note that $W_+$ is a non-increasing operator-valued function in $r$ and thus by Proposition \ref{prop:Cal} and the variational principle  (note that $f(t)=t$ and $\sup_{k\in\Z}f(2^{k+1})/f(2^k)=2$)
	\begin{align}
		N(H_A-V)\le 2^{3/2}R\Big(\frac{ c_\Psi\vartheta}{1-\vartheta}\Big)(1-\vartheta)^{-1/2}\int_{\R^+}\sqrt{\norm[\mathfrak{S}_{1/2}]{W_+(r)}}\dd r\,.
		\label{eq:NPsi}
	\end{align}
	To bound the integrand we need a Lieb--Thirring type inequality for the $1/2$ moments of a  Schr{\"o}dinger operator $W(r)$ with constant magnetic vector potential on $L^2(\Sph^1)$, which we will prove in Lemma \ref{lem:LTa} below. Combining this bound with \eqref{eq:NPsi} we obtain the desired inequality
	\begin{align*}
		N(H_A-V)\le2^{3/2}R\Big(\frac{c_\Psi\vartheta}{1-\vartheta}\Big)(1-\vartheta)^{-1}d_\Psi\int_{\R^+}\int_{\Sph^1}V(r,\varphi)r\dd\varphi\dd r\,.
	\end{align*}
	\begin{lemma}\label{lem:LTa}
		Let $v\in L^1(-\pi,\pi), v\ge0$ and $\Psi\notin\Z$. Then the operator
		\begin{align*}
			h-v:=\left(\ii\frac{\mathrm{d}}{\mathrm{d}\varphi}+\Psi\right)^2-v
		\end{align*}
		with periodic boundary conditions satisfies
		\begin{align*}
			\tr\left(\left(\ii\frac{\mathrm{d}}{\mathrm{d}\varphi}+\Psi\right)^2-v\right)_-^{1/2}\le d_\Psi\int_{-\pi}^\pi v(\varphi)\dd\varphi
		\end{align*}
		with a constant $d_\Psi$ independent of $v$.
	\end{lemma}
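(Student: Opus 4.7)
My strategy is to reduce the problem on the circle to a one-dimensional Schr{\"o}dinger problem on $(-\pi,\pi)$ with Dirichlet boundary conditions---for which the Hundertmark--Lieb--Thirring inequality~\eqref{hlt} applies directly---and to handle the lowest eigenvalue separately using the spectral gap $c_\Psi = \min_{k \in \Z}|\Psi+k|^2 > 0$.

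First I would gauge-transform. The unitary $(Uf)(\varphi) = e^{-i\Psi\varphi}f(\varphi)$ on $L^2(-\pi,\pi)$ conjugates $h$ to $\tilde h := -d^2/d\varphi^2$ acting on the twisted periodic form domain $H^1_\Psi := \{u \in H^1(-\pi,\pi) : u(\pi) = e^{-2\pi i\Psi}u(-\pi)\}$, preserving the spectrum $\{(k-\Psi)^2\}_{k \in \Z}$. Since $H^1_0(-\pi,\pi)$ is a complex codimension-one subspace of $H^1_\Psi$, the min-max principle yields the interlacing $\mu_n \leq \nu_n \leq \mu_{n+1}$ between the ordered eigenvalues $\mu_n$ of $\tilde h - v$ and $\nu_n$ of the Dirichlet operator $\tilde h^D - v$. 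This gives
\begin{equation*}
\tr(h-v)_-^{1/2} \leq (\mu_1)_-^{1/2} + \tr(\tilde h^D - v)_-^{1/2},
\end{equation*}
and because extending any $u \in H^1_0(-\pi,\pi)$ by zero produces an element of $H^1(\R)$ with the same norms, the min-max principle combined with~\eqref{hlt} on $L^2(\R)$ gives $\tr(\tilde h^D - v)_-^{1/2} \leq \tfrac12 \int_{-\pi}^\pi v$.

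To handle the remaining term, if $\mu_1 < 0$ the Birman--Schwinger principle at $E = |\mu_1|$ forces $\|v^{1/2}(h+E)^{-1}v^{1/2}\| = 1$, hence by cyclicity of the trace
\begin{equation*}
1 \leq \tr(v(h+E)^{-1}) = G_E(0,0)\int_{-\pi}^\pi v, \qquad G_E(0,0) = \frac{\sinh(2\pi\sqrt E)}{2\sqrt E\,(\cosh(2\pi\sqrt E) - \cos(2\pi\Psi))},
\end{equation*}
where the explicit expression for $G_E(0,0)$ comes from translation invariance of $h$ together with Poisson summation applied to $(2\pi)^{-1}\sum_{k \in \Z}((k-\Psi)^2 + E)^{-1}$. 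Because $\Psi \notin \Z$, the function $E \mapsto \sqrt E\, G_E(0,0)$ is continuous on $(0,\infty)$, tends to $0$ as $E \to 0^+$ (since $G_0(0,0) = \pi/(2\sin^2(\pi\Psi))$ is finite) and to $1/2$ as $E \to \infty$, so it admits a finite maximum $\tilde C_\Psi$. Then $|\mu_1|^{1/2} \leq \tilde C_\Psi \int_{-\pi}^\pi v$, and the lemma follows with $d_\Psi = \tilde C_\Psi + \tfrac12$.

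The main subtlety is the quantitative control of $\tilde C_\Psi$: as $\Psi$ approaches an integer, $c_\Psi \to 0$ and the denominator $\cosh(2\pi\sqrt E) - \cos(2\pi\Psi)$ becomes small for $E \lesssim c_\Psi$, driving $\tilde C_\Psi \to \infty$---consistent with the impossibility of the inequality at integer flux. A secondary point worth verifying is the direction of the interlacing in the first step: passing from $H^1_\Psi$ to its codimension-one subspace $H^1_0$ shifts eigenvalues upward by one index, so the lowest $\mu_1$ has no Dirichlet partner and must be estimated separately.
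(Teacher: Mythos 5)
Your proof is correct, but it follows a genuinely different route from the paper's. The paper adapts Weidl's proof of the one-dimensional $1/2$-moment Lieb--Thirring inequality: after a preliminary Birman--Schwinger bound on the \emph{number} of negative eigenvalues (which guarantees $\int v$ is not too small), it covers $[-\pi,\pi]$ with multiplicity $2$ by intervals $I_k$ normalised so that $|I_k|\int_{I_k}2v=\eps$, applies Neumann bracketing, and uses Weidl's lemmas to ensure each interval carries exactly one negative eigenvalue bounded via the inverse $g$ of $x\tanh x$; this yields $d_\Psi=4g(\eps)/\eps$. You instead gauge away $\Psi$, compare the twisted-periodic form with the Dirichlet form on the codimension-one subspace $H^1_0(-\pi,\pi)$ to get the interlacing $\mu_n\le\nu_n\le\mu_{n+1}$, dispose of all but the lowest eigenvalue by zero-extension to $\R$ and the sharp Hundertmark--Lieb--Thomas bound \eqref{hlt}, and control $\mu_1$ alone by Birman--Schwinger with the explicit periodic resolvent kernel. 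Your approach buys the sharp constant $\tfrac12$ on the bulk of the spectrum and a fully explicit $d_\Psi=\tfrac12+\sup_{E>0}\sinh(2\pi\sqrt E)/\bigl(2(\cosh(2\pi\sqrt E)-\cos(2\pi\Psi))\bigr)$, which is likely smaller than the paper's; the paper's approach stays within the bracketing machinery it uses throughout and avoids the summation formula for $G_E$. Two cosmetic points: the Birman--Schwinger principle gives that $1$ is an eigenvalue of $K_E$ at $E=|\mu_1|$, hence $\|K_E\|\ge 1$ rather than $=1$ (which is all you use), and you should record that $h-v$ is bounded below with compact resolvent so that the full interlacing of discrete eigenvalues applies; both are routine. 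Your closing observation that $\tilde C_\Psi\to\infty$ as $\Psi\to\Z$ matches the paper's remark that $d_\Psi\to\infty$ in that limit.
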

	We will see that $d_\Psi\to\infty$ as $\Psi\to k\in\Z$.
	Our proof will follow \cite{W}, where a Lieb--Thirring inequality for the $1/2$ moments of a Schr{\"o}dinger operator on $\R$ (i.e.~the bound \eqref{hlt} without the sharp constant) was proved. The main idea of the proof in \cite{W} is to use Neumann bracketing whereby $\R$ is partitioned into disjoint intervals that each support at most one eigenvalue. Importantly the  length of each interval (compared to the $L^1$-norm of the potential on the interval) can be chosen to be uniformly bounded from below. To achieve the latter on the bounded set $(-\pi,\pi)$ we will use that for $\Psi\notin\Z$ the operator $h$  does not have any zero-modes and that $h-v$ does not admit any negative eigenvalues if $\int_{-\pi}^\pi v(\varphi)\dd \varphi$ is small. The desired partition can then be constructed with multiplicity 2. The arguments of \cite{W} all carry over, as we will show in detail below.  We start with recalling the following from \cite[Lemmata 1 and 2]{W}.
	\begin{lemma}\label{lem:W}
		Let $q\in L^1(0,\ell), q\ge0$ and consider the operator $-\frac{\mathrm{d}^2}{\mathrm{d} \varphi^2}-q$ on $L^2(0,\ell)$ with Neumann boundary conditions. 
		\begin{enumerate}[(i)]
			\item With $g$ denoting the inverse of the strictly increasing function $x\tanh x$ on $\R_+$, the lowest eigenvalue $\lambda_1$ of the operator is bounded as
			\begin{align*}
				|\lambda_1|^{1/2}\le \frac{g(\ell \int_0^\ell q(\varphi)\dd \varphi)}{\ell}\,.
			\end{align*}
			\item If 
			\begin{align*}
				\ell\int_0^\ell q(\varphi)\dd\varphi\le 3
			\end{align*}
			then the operator has a single negative eigenvalue. 
		\end{enumerate}
	\end{lemma}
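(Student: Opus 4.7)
I will handle the two parts independently: (i) by a Birman--Schwinger / Neumann Green's function fixed-point argument, and (ii) by the Courant--Fischer min-max principle combined with a sharp Sobolev-type inequality for mean-zero $H^1$-functions on $(0,\ell)$.

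\textbf{Part (i).} Let $\psi > 0$ denote the positive Neumann ground state and write $\lambda_1 = -\kappa^2 \le 0$; the case $\kappa = 0$ forces $q \equiv 0$ and the inequality is trivial. Rewriting the eigenvalue equation as $(-\mathrm{d}^2/\mathrm{d}\varphi^2 + \kappa^2)\psi = q\psi$ and inverting on $L^2(0,\ell)$ with Neumann boundary conditions yields the fixed-point identity
\[
\psi(\varphi) \;=\; \int_0^\ell G(\varphi, \varphi')\, q(\varphi')\, \psi(\varphi')\,\dd \varphi', \qquad G(\varphi, \varphi') = \frac{\cosh(\kappa \varphi_<)\cosh\bigl(\kappa(\ell-\varphi_>)\bigr)}{\kappa \sinh(\kappa \ell)},
\]
with $\varphi_< = \min(\varphi, \varphi')$ and $\varphi_> = \max(\varphi, \varphi')$. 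The product-to-sum identity $2\cosh a \cosh b = \cosh(a+b) + \cosh(a-b)$ together with monotonicity of $\cosh$ on $\R_+$ gives the uniform bound $G(\varphi, \varphi') \le \cosh(\kappa\ell)/(\kappa\sinh(\kappa\ell)) = 1/[\kappa\tanh(\kappa\ell)]$. Evaluating the identity at a maximizer $\varphi_0 \in [0,\ell]$ of $\psi$ and using $\psi(\varphi') \le \psi(\varphi_0)$ inside the integral, division by $\psi(\varphi_0) > 0$ produces $1 \le \norm[1]{q}/[\kappa\tanh(\kappa\ell)]$, hence $\kappa\ell \tanh(\kappa\ell) \le \ell\norm[1]{q}$. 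The strict monotonicity of $s \mapsto s\tanh s$ on $\R_+$ lets us invert, yielding $\kappa\ell \le g(\ell\norm[1]{q})$, which is (i).

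\textbf{Part (ii).} The trial function $u \equiv 1$ gives Rayleigh quotient $-\norm[1]{q}/\ell$, so $\lambda_1 < 0$ whenever $q \not\equiv 0$. To show $\lambda_2 \ge 0$ I apply the Courant--Fischer formula with the one-dimensional test space $\mathrm{span}\{1\}$: it suffices to verify $\int_0^\ell |u'|^2 - \int_0^\ell q|u|^2 \ge 0$ for every mean-zero $u \in H^1(0, \ell)$. From $u(\varphi) = \ell^{-1}\int_0^\ell (u(\varphi) - u(\varphi'))\,\dd \varphi'$ and $u(\varphi) - u(\varphi') = \int_{\varphi'}^\varphi u'$, Fubini yields the representation
\[
u(\varphi) \;=\; \frac{1}{\ell}\left[\int_0^\varphi \varphi'\, u'(\varphi')\,\dd \varphi' \;-\; \int_\varphi^\ell (\ell - \varphi')\, u'(\varphi')\,\dd \varphi'\right].
\]
Cauchy--Schwarz gives $|u(\varphi)|^2 \le \ell^{-2}\cdot \tfrac{\varphi^3 + (\ell - \varphi)^3}{3}\,\norm[2]{u'}^2$, maximized at $\varphi \in \{0, \ell\}$, so $\norm[\infty]{u}^2 \le (\ell/3)\norm[2]{u'}^2$. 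Consequently $\int q u^2 \le \norm[\infty]{u}^2 \norm[1]{q} \le (\ell\norm[1]{q}/3)\norm[2]{u'}^2$ and
\[
\int_0^\ell |u'|^2 \;-\; \int_0^\ell q |u|^2 \;\ge\; \bigl(1 - \tfrac{\ell\norm[1]{q}}{3}\bigr)\norm[2]{u'}^2 \;\ge\; 0
\]
whenever $\ell\norm[1]{q} \le 3$, establishing (ii).

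\textbf{Main obstacle.} The delicate step is producing the sharp constant $1/3$ in the Sobolev estimate $\norm[\infty]{u}^2 \le (\ell/3)\norm[2]{u'}^2$ on mean-zero functions; it is exactly this constant that converts the Cauchy--Schwarz bound into the threshold value $3$ in (ii), and a cruder estimate (e.g.\ via Poincar\'e alone) would lower this threshold. The representation of $u$ above is what makes Cauchy--Schwarz tight. Part (i), once rephrased as a Birman--Schwinger fixed-point relation, is then essentially automatic from the explicit Neumann Green's function.
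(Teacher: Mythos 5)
Your proof is correct, and it is worth noting that the paper itself offers no proof of this lemma: it is quoted verbatim from \cite[Lemmata 1 and 2]{W}, so there is no in-paper argument to compare against. Your part (i) reconstructs the standard (Weidl-type) argument: the Neumann resolvent kernel of $-\mathrm{d}^2/\mathrm{d}\varphi^2+\kappa^2$ on $(0,\ell)$ is as you write it (the Wronskian of $\cosh(\kappa\varphi)$ and $\cosh(\kappa(\ell-\varphi))$ is $-\kappa\sinh(\kappa\ell)$), the uniform bound $G\le 1/(\kappa\tanh(\kappa\ell))$ follows from the product-to-sum identity since both $\varphi_<+\ell-\varphi_>$ and $|\varphi_<+\varphi_>-\ell|$ are at most $\ell$, and evaluating the fixed-point identity at the maximum of the positive ground state gives $\kappa\ell\tanh(\kappa\ell)\le\ell\norm[1]{q}$, which inverts to the claim. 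Part (ii) is also sound: by the max--min principle $\lambda_2\ge\inf\{R(u):\int_0^\ell u=0\}$, your representation of a mean-zero $u$ in terms of $u'$ is correct after Fubini, and the resulting bound $\norm[\infty]{u}^2\le(\ell/3)\norm[2]{u'}^2$ is sharp (equality is approached at $\varphi=0$ by $u$ with $u'(s)\propto\ell-s$ normalised to mean zero), which is precisely what produces the threshold $3$. Two small points you should make explicit: the ground state lies in $H^1(0,\ell)\subset C[0,\ell]$ and may be chosen strictly positive, so evaluating at a maximizer and dividing by $\psi(\varphi_0)>0$ is legitimate; and ``a single negative eigenvalue'' is to be read as ``exactly one'' when $q\not\equiv0$, which your two halves ($\lambda_1<0$ from the constant trial function, $\lambda_2\ge0$ from the mean-zero estimate) together establish.
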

	We now give the proof of Lemma \ref{lem:LTa}.
	\begin{proof}[Proof of Lemma \ref{lem:LTa}]
		For $\lambda<\min_{k\in\Z}(k-\Psi)^2$ the resolvent kernel of the free operator $\left(\ii\frac{\mathrm{d}}{\mathrm{d}\varphi}+\Psi\right)^2$ with periodic boundary conditions is 
		\begin{align*}
			G_\lambda(\eta,\eta')=G_\lambda(\eta-\eta')=\frac{1}{2\pi}\sum_{n\in\Z}\frac{\e^{-\ii n(\eta-\eta')}}{(n+\Psi)^2-\lambda}\,.
		\end{align*}
		Using the Poisson summation formula we can compute that
		\begin{align*}
			G_0(0)=\frac{\pi}{2\sin(\Psi\pi)^2}\,.
		\end{align*}
		The number of eigenvalues of $h-v$ below $0$ is bounded from above by the trace of the Birman--Schwinger operator $K_0=\sqrt{v}h^{-1}\sqrt{v}$, i.e. by $G_0(0)\int_{-\pi}^\pi v(\varphi)\dd \varphi$.  
		
		In the remainder we may thus assume that $\int_{-\pi}^\pi v(\varphi)\dd \varphi\ge 1/G_0(0)=2\sin(\Psi\pi)^2/\pi$, otherwise there are no negative eigenvalues. We now iteratively decompose $[-\pi,\pi]$ into intervals $I_k=[x_k,x_{k+1}]$ with $x_1=-\pi$ and 
		\begin{align}
			|I_k|\int_{I_k} 2v(\varphi)\dd \varphi=\min(3,4\pi/G_0(0))=:\eps\,.
			\label{eq:Ik}
		\end{align}
		Here it is essential that we assumed $2\pi\int_{-\pi}^\pi v(\varphi)\dd \varphi\ge 2\pi/G_0(0)$ as otherwise it would not be possible to define $I_1$. 
		Since $|I_k|\ge\min(3,4\pi/G_0(0))/\int_{-\pi}^\pi 2v(\varphi)\dd \varphi$ it will eventually occur that at some $n$
		\begin{align*}
			(\pi-x_n)\int_{x_n}^\pi 2v(\varphi)\dd \varphi<\min(3,4\pi/G_0(0))\,.
		\end{align*}
		We then end the construction and define the last interval as $I_n=[y,\pi]$ with $y\le x_n$ such that \eqref{eq:Ik} holds.
		We thus have covered $[-\pi,\pi]$ by $n-1$ disjoint intervals $I_k$ and one additional interval $I_n$ that intersects with (only) $I_{n-1}$. The multiplicity of this covering is at most 2.  Importantly \eqref{eq:Ik} holds for all $I_k$.  For $u\in H^1(-\pi,\pi)$ we compute 
		\begin{align*}
		\sclp[2]{u,(h-v)u}=
			\int_{-\pi}^\pi(|\ii u'+\Psi u|^2-v|u|^2)\dd \varphi
			\ge \sum_{k=1}^n \int_{I_k}\left(\frac12|\ii u'+\Psi u|^2-v|u|^2\right)\dd \varphi. 
		\end{align*}
		Note that each of the quadratic forms in the sum above is closed on $H^1(I_k)$. 
		We now consider an operator $H:=\bigoplus_{k=1}^n h_k$ on $\Hs:=\bigoplus_{k=1}^n L^2(I_k)$. Each $h_k$ is defined as the self-adjoint operator corresponding to the $k$-th quadratic form in the above sum on $H^1(I_k)$. If $u\in H^1(-\pi,\pi)$ then $\bigoplus_{k=1}^nu|_{I_k}\in \bigoplus_{k=1}^n H^1(I_k)$ and
		\begin{align*}
			\frac12\norm[\Hs]{\bigoplus_{k=1}^nu|_{I_k}}^2\le\norm[2]{u}^2\le\norm[\Hs]{\bigoplus_{k=1}^nu|_{I_k}}^2\,.
		\end{align*}

		Using the variational principle and the above bounds, we can conclude that the negative eigenvalues $\lambda_m(h)$ of $h$ satisfy
		\begin{align*}
			\lambda_m(h-v)
			&\ge\inf_{\substack{F\subset H^1(-\pi,\pi)\\\dim F=m} }\sup_{u\in F\setminus\{0\}}\frac{-(\sclp[2]{u,hu})_-}{\norm[2]{u}^2}\\
			&\ge 2\inf_{\substack{F\subset H^1(-\pi,\pi)\\\dim F=m} }\sup_{u\in F\setminus\{0\}}\frac{-(\sclp[\Hs]{\bigoplus_{k=1}^nu|_{I_k},H\bigoplus_{k=1}^nu|_{I_k}})_-}{\norm[\Hs]{\bigoplus_{k=1}^nu|_{I_k}}^2}
		\end{align*}
		If $F$ is an $m$-dimensional subspace of $L^2(-\pi,\pi)$, then $\{\bigoplus_{k=1}^n u|_{I_k}:\,u\in F\}$ is an $m$-dimensional subspace of $\Hs$ and thus 
		\begin{align*}
			\lambda_m(h-v)\ge  2\inf_{\substack{F\subset \bigoplus_{k=1}^n H^1(I_k)\\\dim F=m} }\sup_{U\in F\setminus\{0\}}\frac{-(\sclp[\Hs]{U,HU})_-}{\norm[\Hs]{U}^2}=2\lambda_m(H)\,.
		\end{align*}
		The negative eigenvalues $\lambda_m(H)$ of $H$ coincide (including multiplicity) with the individual eigenvalues of the operators $h_k$. Via the transform $u\mapsto \e^{\ii\Psi\varphi}u$ each $h_k$ is seen to be unitarily equivalent to the operator
		\begin{align*}
			-\frac12\frac{\mathrm{d}^2}{\mathrm{d}\varphi^2}-v(\varphi)
		\end{align*}
		with Neumann boundary conditions on $I_k$. Thus, by \eqref{eq:Ik} and Lemma \ref{lem:W}, each $h_k$ has a single negative eigenvalue $\lambda_1(h_k)$ which is bounded by 
		\begin{align*}
			|2\lambda_1(h_k)|^{\frac12}\le\frac{g(|I_k|\int_{I_k}2v(\varphi)\dd\varphi)}{|I_k|}= \frac{g(\eps)}{|I_k|} =\frac{g(\eps)}{\eps}\int_{I_k} 2v(\varphi)\dd \varphi\,.
		\end{align*}
		We can conclude that
		\begin{align*}
			\sum_{m\ge 1}|\lambda_m(h-v)|^{\frac12}
			\le \sum_{k=1}^n|2\lambda_1(h_k)|^{\frac12}
			\le \sum_{k=1}^n \frac{g(\eps)}{\eps}\int_{I_k} 2v(\varphi)\dd \varphi 
			\le\frac{4g(\eps)}{\eps} \int_{-\pi}^\pi v(\varphi)\dd \varphi
		\end{align*}
		which finishes the proof. If $\Psi\to k\in\Z$ then $1/G_0(0)\to0$ and thus $\eps\to0$. Since $g(\eps)/\eps\to\infty$ as $\eps\to0$, we observe that $d_\Psi=4g(\eps)/\eps\to\infty$. 
	\end{proof}
\begin{remark}[Explicit constant]\label{Re:constant3}
    For particular $\Psi$, an upper bound on the constant in Theorem \ref{th:AB} can be found following the scheme of Remark \ref{Re:Constant}.
\end{remark}

	%%%%%%%%%%%%%%%%%%%%%%%%%%%

	\section{Proof of Theorem \ref{th:as}}\label{sec:as}
	   Finally, consider the operator $H_{\mathrm{as}}-V=-\Delta-V$ on $L^2_{\mathrm{as}}(\mathbb{R}^2)$. It is well known that for antisymmetic functions $u$, in $H^1(\mathbb{R}^2)$, we have
		\begin{equation*}
			\int_{\mathbb{R}^2} |{\nabla u}|^2 \dd x\geq \int_{\mathbb{R}^2} \frac{|u|^2}{|x|^2}\dd x.
		\end{equation*}
		
		Repeating the arguments above, we arrive at an equation similar to (\ref{eq:thm2}) with $\Psi=0$, $c_\Psi=1$ and operator-valued potential
		\begin{equation*}
		    W(r)=\frac{1-\vartheta}{r^2}\frac{\partial^2}{\partial \varphi^2}+V(r,\varphi)
		\end{equation*}
		defined on the domain of periodic functions in $H^1(-\pi,\pi)$ which satisfy $u(\varphi)=-u(-\varphi)$.
		From Proposition \ref{prop:Cal} we have (with $f(t)=t$) 
	    \begin{equation*}
			N(H_{\mathrm{as}}-V)\le 2^{3/2}R\Big(\frac{\vartheta}{1-\vartheta}\Big)(1-\vartheta)^{-1/2}\int_{\mathbb{R}^+}\sqrt{\norm[\mathfrak{S}_{1/2}]{W_+(r)}}\dd r\,.
		\end{equation*}
	    Since functions in the domain of $W(r)$ vanish at $\pm\pi$ we can extend them onto the whole line by zero and use the standard Lieb--Thirring bound \eqref{hlt} (where $V$ is also extended by zero) to obtain
		\begin{equation*}
			N(H_{\mathrm{as}}-V)\le \sqrt{2}R\Big(\frac{\vartheta}{1-\vartheta}\Big)(1-\vartheta)^{-1}\int_{\mathbb{R}^+}\int_{\Sph^1} V(r,\varphi) r\dd \varphi \dd r\,.
		\end{equation*}
        \begin{remark}[Explicit constant]
            The constant in this case can be bounded above by $5.42152$, following Remark \ref{Re:Constant}.  
        \end{remark}
	
	\subsection*{Acknowlegements}A.~Laptev is grateful to M.~Z.~Solomyak for useful discussions of Calogero inequality for matrix-valued potentials.
	A.~Laptev was partially supported by RSF grant 18-11-0032. L.~Schimmer was supported by the VR grant 2017-04736 at Royal Swedish Academy of Sciences. The authors thank T.~Weidl for a helpful discussion. 
	%%%%%%%%%%%%%%%%%%%%%%%%%%%%

\end{document}